\newif\ifconference
\newcommand{\comments}{1} 
\newcommand{\jakub}[1]{}
\newcommand{\talya}[1]{}
\newcommand{\shyam}[1]{}
\newtheorem{theorem}{Theorem}
\newtheorem{corollary}[theorem]{Corollary}
\newtheorem{lemma}[theorem]{Lemma}
\theoremstyle{remark}
\newcommand{\eps}{\varepsilon}
\renewcommand{\epsilon}{\varepsilon}
\newcommand{\BE}{\mathbb{E}}
\newcommand{\BP}{\mathbb{P}}
\date{}
\title{\Large Sampling an Edge in Sublinear Time Exactly and Optimally
}
\title{Sampling an Edge in Sublinear Time Exactly and Optimally
}
\author{Talya Eden\thanks{Bar Ilan University, \texttt{talyaa01@gmail.com}, supported by the NSF TRIPODS program, award CCF-1740751. This work was partially done while affiliated with MIT and Boston University.} \and Shyam Narayanan\thanks{MIT, \texttt{shyamsn@mit.edu}, supported by the NSF GRFP Fellowship and the NSF TRIPODS Program (award DMS-2022448).} \and Jakub T\v{e}tek\thanks{BARC, Univ. of Copenhagen, \texttt{j.tetek@gmail.com}, supported by the VILLUM Foundation grant 16582. This work was partially done while visiting MIT.}}
\author{
    Talya Eden\thanks{Talya Eden is supported by the NSF TRIPODS program, award CCF-1740751. This work was partially done while affiliated with MIT and Boston University.}\\ \texttt{talyaa01@gmail.com} \\ Bar Ilan University
    \and
    Shyam Narayanan\thanks{Shyam Narayanan is supported by the NSF GRFP Fellowship and the NSF TRIPODS Program (award DMS-2022448).}\\ \texttt{shyamsn@mit.edu} \\ 
    MIT
    \and
	Jakub Tětek\thanks{Jakub T\v{e}tek is supported by the VILLUM Foundation grant 16582. This work was partially done while visiting MIT.}\\
	\texttt{j.tetek@gmail.com}\\
	BARC, Univ. of Copenhagen
}
\begin{document}
\maketitle
\ifconference
\fancyfoot[R]{\scriptsize{Copyright \textcopyright\ 2023 by SIAM\\
Unauthorized reproduction of this article is prohibited}}
\else
\thispagestyle{empty}
\fi

\begin{abstract}
\ifconference \small\baselineskip=9pt \fi
Sampling edges from a graph in sublinear time is a fundamental problem and a powerful subroutine for designing sublinear-time algorithms. Suppose we have access to the vertices of the graph and know a constant-factor approximation to the number of edges. An algorithm for pointwise $\varepsilon$-approximate edge sampling with complexity $O(n/\sqrt{\varepsilon m})$ has been given by Eden and Rosenbaum [SOSA 2018]. This has been later improved by T\v{e}tek and Thorup [STOC 2022] to $O(n \log(\varepsilon^{-1})/\sqrt{m})$. At the same time, $\Omega(n/\sqrt{m})$ time is necessary. We close the problem, by giving an algorithm with complexity $O(n/\sqrt{m})$ for the task of sampling an edge exactly uniformly.

%
\end{abstract}

\newpage
\pagenumbering{arabic}

\section{Introduction}
Suppose we have a graph too big to even read the whole input. We then need an algorithm running in time sublinear in the input size. Such algorithms 
have recently received a lot of attention. In the sublinear-time settings, one usually has direct access to the vertices of the input graph, but not to the edges. Because of this, one tool commonly used for designing sublinear-time graph algorithms is an algorithm for sampling edges. This allows us to design an algorithm that uses random edge queries, as we can simulate these queries by the edge sampling algorithm.

%
\paragraph{The task and query access.} Our goal is to sample an edge uniformly, i.e., to return an edge so that each edge is returned with exactly equal probability.
%
We assume that the algorithm may $(i)$ ask for the $i$-th vertex of the input graph, $(ii)$ ask for the degree of a given vertex, and $(iii)$ ask for the $j$-th neighbor of a given vertex. We assume the algorithm has (approximate) knowledge of the number of edges $m$. This assumption of knowing $m$ was not made in the previous work, and we think getting rid of this assumption is a very interesting open problem. This assumption is, however, not a barrier to using our algorithm as a subroutine for implementing random edge queries, as we discuss below.

\paragraph{Ours and previous results.} In past work  only algorithms for sampling $\epsilon$-\emph{approximately} uniformly in the pointwise distance (or equivalently approximately in the $\ell_\infty$ metric) were given, where the state-of-the-art complexity for that problem was $O(n \log (\varepsilon^{-1})/\sqrt{m})$ given by T\v{e}tek and Thorup~\cite{tetek2022edge}.
 Our algorithm is not only exact, but also more efficient. Specifically, the expected complexity of our algorithm is $O(n/\sqrt{m})$.
This is known to be the best possible. If we have a graph with $n - \Theta(\sqrt{m})$ isolated vertices and a clique over $\Theta(\sqrt{m})$ vertices with $m$ edges, we need to sample $\Omega(n/\sqrt{m})$ vertices before we expect  to see a single edge, giving us a simple matching lower bound.

\paragraph{Using our algorithm as a subroutine and the necessity of knowing $m$.}
As we said above, our algorithm needs to have a constant-factor approximation of the number of edges. This was not necessary in  previous works. The reason is that the previous state-of-the-art has complexity in which one can also afford to independently estimate the number of edges; the possibility of the estimate being incorrect is then added to the bias of the edge sampling algorithm. However, as our algorithm is more efficient, we are not able to estimate $m$ in that complexity, and since we want to sample exactly, we cannot accept that the estimate could be wrong.

Suppose we have an algorithm $A$ that performs random edge queries. We may then use our algorithm in a black box manner to implement these queries (unlike, for example, the algorithm for sampling multiple edges from \cite{eden_multiple} which has polynomial dependency on $\varepsilon$). Specifically, if $A$ uses $q$ random edge queries, then we may set $\varepsilon = 1/(10 q)$ and it will only decrease the success probability of $A$ by at most $1/10$. \footnote{This holds because the total variation distance from uniform of each query is at most $1/(10 q)$, so the total variation distance from uniform of the sequence of $q$ queries is at most $1/10$, meaning that the output from the algorithm has total variation distance at most $1/10$ from the distribution the output would have if the queries were answered exactly.}

If the goal is to get an algorithm with a constant success probability (which can then be amplified) that uses our edge sampling algorithm as a subroutine, then we may remove the need for having an a priori constant-factor approximation $\tilde{m}$ of $m$ by computing it using the algorithm from \cite{feige2006sums}, only adding a constant to the failure probability. The algorithm from \cite{feige2006sums} has expected complexity $O(n/\sqrt{m})$. The complexity of our algorithm will also still be as desired: it follows from our analysis that the complexity is $O(\frac{n \sqrt{\tilde{m}}}{m})$. It holds by the Jensen inequality that $\BE[\frac{n \sqrt{\tilde{m}}}{m}] \leq \frac{n \sqrt{\BE[\tilde{m}}]}{m} + \log \varepsilon^{-1} = O(n/\sqrt{m})$ since it holds $\BE[\tilde{m}] = O(m)$.

To summarize, we may remove the assumption of a priori knowledge of $m$ when sampling multiple edges at the cost of adding a constant failure probability. This means that we may use our algorithm as a subroutine in an algorithm with constant probability of error, even without knowing $m$ a priori.

\subsection{Technical overview}\label{sec:overview}


The starting point of our algorithm is the algorithm by \citet{eden2018sampling}, which we now shortly recall.

\subsubsection{The algorithm by \citet{eden2018sampling}.}
Consider each undirected edge as two directed edges, and let $\theta$ be a degree threshold. We refer to vertices with degree at most $\theta$ as \emph{light} vertices, and to all other vertices as \emph{heavy}. We refer to edges originating in light vertices as \emph{light edges}, and to all other edges as \emph{heavy edges}. Using rejection sampling, light edges can be sampled with probability \emph{exactly} $\frac{1}{n\theta}$: by sampling a uniform vertex $v$, then sampling one of its incident edges u.a.r., and then returning that edge with probability $\frac{d(v)}{\theta}$. Sampling heavy vertices is done by first sampling a light edge $uv$ as described above, and if the second endpoint $v$ of the sampled light edge is heavy, sampling one of its incident edges. This procedure results  in  every heavy edge $vw$ being  sampled with probability $\frac{d_{\ell}(v)}{n\theta}\cdot \frac{1}{d(v)}$, where $d_{\ell}(v)$ is the number of light neighbors of $v$. In \citet{eden2018sampling}, $\theta$ is set to $\sqrt{2m/\eps}$ 
which implies that for every heavy vertex $v$, $d_{\ell}(v)\in [(1-\eps)d(v), d(v)]$.\footnote{Since, denoting by $\cal{H}$ the set of heavy vertices, and by $d_{h}(v)$ the number of heavy neighbors of vertex $v$, we have the following. For every $v\in \cal{H}$, $d_{h}(v)\leq |H|\leq \frac{2m}{\theta}=\sqrt{2\eps m}=\eps \theta\leq \eps d(v)$. Therefore, $d_{\ell}(v)>(1-\eps)d(v)$.}
Hence, each (heavy) edge is sampled with probability in $[\frac{(1-\epsilon)}{n \theta}, \frac{1}{n \theta}]$. The total probability of sampling some edge (with the algorithm failing otherwise) is thus at least $\frac{(1-\epsilon)m}{n \theta} \approx \frac{\sqrt{\epsilon m}}{n}$. Therefore, the number of attempts needed before we expect to sample an edge is $O(\frac{n}{\sqrt{\epsilon m}})$, implying  a multiplicative dependence on $\eps$.

\subsubsection{Improving the dependency on $\eps$.}
In order to avoid the multiplicative dependency in $\eps$, we instead  set  the threshold $\theta$ to $\sqrt{c m}$ for some constant $c$.
Considering the same sampling procedures as before,  light edges can still be sampled with probability exactly $\frac{1}{n\theta}$. For heavy edges, however, 
the values $d_{\ell}(v)/d(v)$ can vary up to a constant factor between the different 
 heavy vertices,  
 leading to a large bias towards heavy edges originating in vertices $v$ with higher values of $\frac{d_{\ell}(v)}{d(v)}$. 
If  for each vertex $v$, we knew the value of $d_{\ell}(v)$, we could use rejection sampling with probability $q$ that is inversely proportional to $p=\frac{d_\ell(v)}{d(v)}$, e.g., $q = \frac{d(v)}{2d_{\ell(v)}}=\frac{1}{2p}$ (we may assume that, say, $p\geq 2/3$ and thus $q < 1$, by making $c$ large enough). This would result in each heavy edge being sampled with exactly equal probability $\frac{d_{\ell(v)}}{n\theta}\cdot \frac{d(v)}{2d_{\ell}(v)}=\frac{1}{2n\theta}$.

While we do not know the exact value of $d_{\ell}(v)$, we can approximate it up to a $(1\pm\Theta(\eps))$-multiplicative factor using $O(1/\eps^2)$ neighbor queries. This results in $(1\pm \Theta(\epsilon))$-approximation of $q$ and thus leads to a distribution that is $\eps$-close to uniform.  Note that  we only need to approximate $q$ when the  algorithm samples a heavy edge. Moreover, when we do that, we return the edge with constant probability. Thus, in expectation, we only need to approximate $q$  a constant number of times. This means that the total expected time complexity is $O(n/\sqrt{m} + 1/\varepsilon^2)$.

To remove the dependence on $1/\eps^2$, our main observation is that we do not actually need to (approximately) learn the value of $p$, in order to reject with probability proportional to $q=1/(2p)$. Rather, we can ``simulate'' a Bernoulli trial with probability exactly $\frac{1}{2p}$ by using the results of only $O(1)$ many $Bern(p)$ trials in expectation (though possibly more in the worst case), using the Bernoulli Factory technique of Nacu and Peres~\cite{nacu2005simulation}.



When we sample a uniform neighbor of a heavy vertex $v$, we see a light neighbor of $v$ with probability exactly $p=\frac{d_{\ell}(v)}{d(v)}$, where, as discussed above, we can set $\theta$ so that $p>2/3$.
Therefore, we have access to a Bernoulli trial that succeeds with probability $Bern(p)$ for $p>2/3$. 
As previously explained, in order to achieve  uniformity, we need to perform  rejection sampling (corresponding to a Bernoulli trial) that succeeds with probability $=\frac{1}{2p}$. We can simulate $Bern(1/(2p))$ by relying on the results of an expected $O(1)$ independent copies of $Bern(p)$. 
Namely, we perform an expected $O(1)$ neighbor queries where each results in a light neighbor with probability $p$, giving us an independent copy of a random variable distributed as $Bern(p)$. If we knew that $p$ is bounded away from $1$, we could directly use the result of \citet{nacu2005simulation}. We can ensure that this is the case by first rejecting with probability, say, $1/2$. The probability of getting a light neighbor and not rejecting is then $p/2 \in [1/3,2/3]$. We then use the result of \cite{nacu2005simulation} with the function $1/(4x)$, thus simulating $Bern(1/(4p/2))=Bern(1/(2p))$.

\subsection{Related work}

Using uniform edge samples as a basic query in the sublinear time setting was first suggested by Aliakbarpour et al.~\cite{aliakbarpour} in the context of estimating the number of $s$-stars in a graph, where they showed that this access allows to circumvent lower bounds that hold in the standard adjacency list access. 
It was later used for the more general tasks of estimating and uniformly sampling arbitrary subgraphs in sublinear time~\cite{AKK19, Peng20, BER21,tetek2022edge}.

As mentioned in the introduction, sampling edges from a distribution that is pointwise close to uniform in sublinear time was first suggested by Eden and Rosenbaum~\cite{eden2018sampling} who gave an algorithm with complexity $O\left(n/\sqrt{\varepsilon m}\right)$. This was later improved by T\v{e}tek and Thorup~\cite{tetek2022edge} to an algorithm with complexity $O(n\log(\varepsilon^{-1})/\sqrt m)$.
They also considered two additional access models (full neighborhood access and hash-ordered access) and gave new lower and upper bounds for these settings. In~\cite{eden_multiple}, Eden, Mossel, and Rubinfeld gave an upper bound for the problem of sampling $k$ edges from a pointwise close to uniform distribution. The complexity of their algorithm is $O\left(\sqrt{k}\cdot \frac{n}{\sqrt {m}}\cdot \frac{\log^2n}{\varepsilon^{2.5}}+k\right)$. This was later shown to be essentially optimal (i.e., up to the dependencies on $\varepsilon$ and $\log n$) by~\cite{tetek2022edge}.
In~\cite{eden2019arboricity}, Eden, Ron and Rosenbaum gave an $O\left(\frac{n\alpha}{m}\cdot\frac{\log ^3 n}{\varepsilon}\right)$ algorithm for sampling edges in graphs with arboricity at most $\alpha$. They also showed their algorithm is optimal up to the $\text{poly}(\log n,\varepsilon^{-1})$ dependencies.

The task of sampling close to uniform edges was also recently considered in the setting where the access to the graph is given via  Bipartite Independent Set (BIS) queries~\cite{lapinskas2019approximately,bhattacharya2022faster, addanki2022non}.

\section{Preliminaries}

\paragraph{The query model:} Since we do not have time to read the whole input (and thus to change its representation), it matters how exactly we are able to query it. Throughout this paper, we assume that the graphs' vertices are labeled arbitrarily in $[n]$, the edges of every vertex $v$ are labeled arbitrarily by $[d(v)]$, and that the algorithm knows $n$.
We then assume the following standard set of queries:
\begin{itemize}
    \item \textbf{Uniform vertex queries:} given $i \in [n]$, return the $i$-th vertex
    \item \textbf{Degree queries:} given a vertex $v$, return its degree $d(v)$
    \item \textbf{Neighbor queries:} given a vertex $v$ and $j \in [d(v)]$, return the $j$-th neighbor of $v$
\end{itemize}
This setting has been previously called the adjacency list or indexed neighbor access model and is among the most-studied settings for sublinear-time algorithms. 

A model with an additional query
\begin{itemize}
    \item \textbf{Uniform edge queries:} given $i \in [m]$, return vertices $u,v$ such that $uv$ is the $i$-th edge of the graph
\end{itemize}
has also been considered. Our algorithm can be thought of as a reduction between the two models. One can show (by randomly permuting the edges) that up to a logarithmic factor this setting is equivalent to just assuming random edge queries (with replacement). Our algorithm then allows us to simulate  random edge queries in the indexed neighborhood access model.

\section{Sampling an edge}

In this section, we give our algorithm for sampling an edge. We start by stating a result of \citet{nacu2005simulation} about ``Bernoulli factories''. We recall that a function $f$ from a closed interval $I$ to $\mathbb{R}$ is said to be \emph{real analytic} if for any point $x_0$ in the interior of $I$, $f(x) = \sum_{n=0}^\infty \frac{f^{(n)}(x_0)}{n!} \cdot (x-x_0)^n$ where $f^{(n)}(x_0)$ represents the $n$th derivative of $f$ at $x_0$. Equivalently, $f$ matches its Taylor series about $x_0$ for all of $I$.

\begin{theorem} \label{thm:coin_simulator}
Let $I \subset (0,1)$ be a closed interval and $f: I \rightarrow (0,1)$ be a real analytic function. Let $p$ be a number in $I$. Then, there exists an algorithm independent of $p$ that performs in expectation $O(1)$ independent trials from $Bern(p)$ and returns one Bernoulli trial with distribution $Bern(f(p))$. In addition, the probability of using more than $k$ independent trials is at most $C \rho^k$, for some constants $C \ge 1, 0 < \rho < 1$ that only depend on $I$.
\end{theorem}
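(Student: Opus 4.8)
The plan is to follow the ``Bernoulli factory'' machinery of Nacu and Peres~\cite{nacu2005simulation}. First I would reduce to the situation where both the source bias $p$ and the target value $f(p)$ are bounded away from $0$ and $1$: since $I$ is a closed subinterval of $(0,1)$ we have $I\subseteq[\epsilon,1-\epsilon]$ for some $\epsilon>0$, and since $f$ is continuous and maps $I$ into the open interval $(0,1)$, its image $f(I)$ is a compact subset of $(0,1)$, so $f(I)\subseteq[\delta,1-\delta]$ for some $\delta>0$. Hence throughout we may assume $p\in[\epsilon,1-\epsilon]$ and $\delta\le f(p)\le 1-\delta$.

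The engine of the construction is the following reduction: it suffices to exhibit two sequences of polynomials $(g_n)$ and $(h_n)$, indexed by $n$ ranging over the degrees $1,2,4,8,\dots$, such that on $I$ we have $0\le g_n(p)\le f(p)\le h_n(p)\le 1$; each of $g_n,h_n$ is a \emph{Bernstein polynomial of degree $n$}, i.e.\ of the form $\sum_{k=0}^n a(n,k)\binom{n}{k}p^k(1-p)^{n-k}$ with all coefficients $a(n,k)\in[0,1]$; the envelopes are consistent across a doubling of the degree (when the level-$2n$ polynomials are re-expanded in the degree-$2n$ Bernstein basis, their coefficients ``refine'' the level-$n$ ones in the sense required below); and $\sup_{p\in I}\bigl(h_n(p)-g_n(p)\bigr)\le A\beta^{\,n}$ for constants $A\ge 0$, $0<\beta<1$. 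Given such envelopes, the algorithm flips the $p$-coin in batches of sizes $1,2,4,\dots$; after $n$ flips with $k$ heads it commits to output $1$ with a conditional probability determined by the envelope coefficients at levels $n/2$ and $n$, commits to output $0$ with a conditional probability determined likewise from $1-h$, and otherwise continues. A short induction using consistency shows these conditional probabilities are nonnegative (hence valid), that the overall probability of eventually outputting $1$ equals exactly $f(p)$, and that the probability of not having committed after $n$ flips is at most $h_n(p)-g_n(p)\le A\beta^{\,n}$. Geometric decay of this ``undecided'' probability immediately yields $\BE[N]=O(1)$ for the number $N$ of coin flips used, and $\BP[N>k]\le C\rho^k$ for suitable constants depending only on $I$.

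It then remains to construct the envelopes, and this is where real analyticity enters and is the main obstacle. Since $f$ is real analytic on the compact set $I$, it extends to a holomorphic function on a complex neighborhood of $I$ (shrinking $I$ slightly if necessary), so by the classical fact that a function analytic on a neighborhood of a compact interval is approximable there by degree-$n$ polynomials with geometrically decaying error, there are polynomials $P_n$ of degree $n$ with $\sup_{p\in I}|f(p)-P_n(p)|\le A_0\beta_0^{\,n}$. Setting $\underline f_n=P_n-A_0\beta_0^{\,n}$ and $\overline f_n=P_n+A_0\beta_0^{\,n}$ gives $\underline f_n\le f\le \overline f_n$ on $I$ with a geometrically small gap, and for $n$ large these lie strictly between $0$ and $1$ on $I$. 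The difficulty is that a polynomial lying in $(0,1)$ merely on the subinterval $I$ need not have all its Bernstein coefficients in $[0,1]$, since the degree-$n$ Bernstein coefficients pin down the values at the endpoints $0$ and $1$, which we do not control. Nacu and Peres resolve this by a careful two-sided patching: one perturbs $\underline f_n$ and $\overline f_n$ near the ends of $[0,1]$ without disturbing their values on $I$, raises the degree enough that a P\'olya/Hausdorff-type positivity argument applies (a polynomial that is $\ge\delta'>0$ on $[0,1]$ has nonnegative coefficients in every sufficiently high-degree Bernstein basis, and symmetrically for staying $\le 1$), and chooses the degree sequence and perturbations so that the level-$n$ and level-$2n$ representations are consistent. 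Assembling these pieces yields the required envelopes and hence the theorem; for our purposes we simply invoke this construction from~\cite{nacu2005simulation}, as reproducing it in full is beyond the scope of this paper.
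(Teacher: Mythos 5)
The paper does not prove this theorem: it is stated as a black-box import from Nacu and Peres~\cite{nacu2005simulation}, so there is no in-paper proof to compare against. Your sketch of the Nacu--Peres argument is accurate: the reduction to $p\in[\epsilon,1-\epsilon]$ and $f(p)\in[\delta,1-\delta]$, the reduction of ``simulability with exponential tails'' to the existence of consistent Bernstein-coefficient envelopes $g_n\le f\le h_n$ with geometrically decaying gap, the appeal to geometric polynomial approximation of functions analytic on a neighborhood of $I$, and the correct identification of the real obstacle (Bernstein coefficients are unconstrained near the endpoints $0$ and $1$, which lie outside $I$) together with the P\'olya-type degree-raising and endpoint-patching resolution are all faithfully recalled. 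Since, like the paper, you ultimately invoke \cite{nacu2005simulation} for the detailed envelope construction and the consistency verification across doubling degrees, your treatment matches the paper's, just in more expository form.
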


Theorem \ref{thm:coin_simulator} gives us the following corollary.
\begin{corollary} \label{cor:simulate}
There is an algorithm that for any $p \in [2/3,1]$ performs in expectation $O(1)$ trials from $Bern(p)$ and returns one Bernoulli trial distributed as $Bern(1/(2p))$. (Note that the algorithm may also use its own randomness, independent of the $Bern(p)$'s given.)
\end{corollary}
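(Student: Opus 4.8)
The plan is to reduce directly to Theorem~\ref{thm:coin_simulator} after a trivial preprocessing step whose only purpose is to push the bias of the given coin away from the singularity of the target function at $0$. First I would turn the given $Bern(p)$ coin, with $p \in [2/3,1]$, into a $Bern(p/2)$ coin: draw one sample $X \sim Bern(p)$, draw an independent fair bit $Y \sim Bern(1/2)$ using the algorithm's own randomness, and output $X \wedge Y$. This consumes exactly one $Bern(p)$ trial and produces a sample distributed as $Bern(p/2)$, and crucially the new bias lands in the \emph{fixed} closed interval $I := [1/3,1/2] \subset (0,1)$, independent of $p$.

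Next I would apply Theorem~\ref{thm:coin_simulator} with this interval $I = [1/3,1/2]$ and the function $f(x) = 1/(4x)$. The hypotheses are easy to check: $f$ is real analytic on $I$ because $I$ does not contain $0$, and $f$ maps $I$ into $(0,1)$ since $f$ is decreasing on $I$ with $f(1/3) = 3/4$ and $f(1/2) = 1/2$, so $f(I) = [1/2,3/4] \subset (0,1)$. Since $p/2 \in I$, the theorem yields an algorithm, independent of $p$, that uses in expectation $O(1)$ independent $Bern(p/2)$ trials and outputs a single Bernoulli trial distributed as $Bern(f(p/2)) = Bern\bigl(1/(4 \cdot p/2)\bigr) = Bern(1/(2p))$, which is exactly the desired output.

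Finally I would account for the number of $Bern(p)$ trials. Each $Bern(p/2)$ trial requested by the simulation is produced from exactly one $Bern(p)$ trial by the $X \wedge Y$ construction, so if the simulation uses $N$ trials of $Bern(p/2)$ then it uses exactly $N$ trials of $Bern(p)$; since $\BE[N] = O(1)$ by Theorem~\ref{thm:coin_simulator}, the expected number of $Bern(p)$ trials is $O(1)$. (The geometric tail bound $C\rho^k$ from the theorem also transfers verbatim, which is what is needed later for the running-time analysis of the full edge-sampling algorithm.)

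I do not anticipate a genuine obstacle: essentially all the work is in choosing the right fixed interval and verifying that $1/(4x)$ is analytic and $(0,1)$-valued there. The one point to be slightly careful about is that the target function must stay bounded away from $1$ as well as from $0$ — but the halving trick forces the bias into $[1/3,1/2]$, on which $f$ takes values in $[1/2,3/4]$, so both sides are handled automatically and the construction remains oblivious to the exact value of $p$.
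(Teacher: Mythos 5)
Your proposal is correct and follows essentially the same route as the paper's proof: halve the coin to push the bias into a fixed closed subinterval of $(0,1)$, then apply \Cref{thm:coin_simulator} to the function $f(x)=1/(4x)$. In fact your interval bookkeeping is more careful than the paper's, which writes $[1/2,2/3]$ and $[3/8,1/2]$ where it should be $[1/3,1/2]$ and $[1/2,3/4]$ (an apparent typo that does not affect validity, since $1/(4x)$ is analytic and $(0,1)$-valued on either interval).
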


\begin{proof}
    First, note that for any $p$ we can simulate $Bern(p/2)$ from a single $Bern(p)$, by simulating an independent $Bern(1/2)$ and considering the event where both $Bern(p)$ and $Bern(1/2)$ equal $1$.
    Then, it is well-known that $\frac{1}{4x}$ is real analytic on the interval $[1/2, 2/3]$, and $\frac{1}{4x}$ is contained in $[3/8, 1/2]$ for $x \in [1/2, 2/3]$. Since we can generate $Bern(p/2)$, we can therefore apply Theorem \ref{thm:coin_simulator} to get a trial from $Bern\left(\frac{1}{4(p/2)}\right) = Bern\left(\frac{1}{2p}\right)$.
\end{proof}

We now give an algorithm for sampling an edge. The algorithm closely follows the approach from \cite{eden2018sampling} but uses the Bernoulli factory of \cite{nacu2005simulation} to reduce the sampling error in a significantly more efficient way than in \cite{eden2018sampling}. We first give an algorithm for $\Theta(1)$-approximate edge sampling.

Throughout this section, we assume for sake of simplicity that we know the number of edges exactly. The analysis of correctness only uses that we have an upper bound, while the analysis of the complexity needs that we have a lower bound up to a constant factor. Putting this together, it is in fact sufficient to have a constant-factor approximation.

\begin{algorithm}[t]
$u \leftarrow$ uniformly random vertex \label{step:sample_v}\\
$j \leftarrow Unif([\theta]),$ where $\theta = \lceil\sqrt{6m}\rceil$. \label{step:sample_index}\\
Fail if $d(v) > \theta$ or $d(v) \leq j$\\
$v \leftarrow j$-th neighbor of $u$\\
$B \sim Bern(1/3)$ \label{step:bern_third}\\
\uIf{$B = 1$}{
\Return{$uv$}\\
}\ElseIf{$B = 0$ and $v$ is heavy}{
$w \leftarrow$ random neighbor of $v$ \label{step:samp_heavy_nbr}\\
\Return{$vw$}
}
\Return{Fail}
\caption{Sample an edge pointwise $\Theta(1)$-close to uniform} \label{alg:try_to_sample_an_edge}
\end{algorithm}

\begin{lemma} \label{lemma:alg2}
Let $e$ be the edge returned by \Cref{alg:try_to_sample_an_edge} if successful. Then for any light edge $e'$, it holds that $\BP(e = e') = 1/(3 n \theta)$, and for any heavy edge, it holds that $\BP(e = e') = \frac{2}{3} \cdot \frac{d_{\ell}(v)}{d(v)} \cdot \frac{1}{n\theta}$.
\end{lemma}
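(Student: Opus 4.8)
The plan is to fix a target \emph{directed} edge $e'$ (we regard each undirected edge as two directed edges, as in \cite{eden2018sampling}, matching the fact that \Cref{alg:try_to_sample_an_edge} returns ordered pairs $uv$ or $vw$), to enumerate every execution of the algorithm that outputs $e'$, and to sum the probabilities of these mutually exclusive executions. The key building block, inherited from \cite{eden2018sampling}, is the rejection-sampling identity underlying lines~\ref{step:sample_v}--\ref{step:sample_index} together with the fail-check (which we read as failing exactly when $u$ is heavy or $j>d(u)$): for any light vertex $u$ and any index $j\in[d(u)]$, the probability that line~\ref{step:sample_v} picks this particular $u$, line~\ref{step:sample_index} picks this particular $j$, and the algorithm does not fail is exactly $\frac1n\cdot\frac1\theta=\frac1{n\theta}$ (the fail-check does not trigger since $d(u)\le\theta$ and $j\le d(u)$, whereas a heavy $u$ would fail immediately). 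Equivalently, for any directed light edge $(u,v)$ there is a unique index $j$ --- the position of $v$ in $u$'s neighbor list --- for which the assignment $v\leftarrow j$-th neighbor of $u$ yields this $v$, and the probability of reaching line~\ref{step:bern_third} with exactly these $u$ and $v$ is $\frac1{n\theta}$.

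\emph{Light edges.} Let $e'=(u,v)$ with $u$ light. The only branch that can return $e'$ is the $B=1$ branch: the $B=0$ branch returns an edge $vw$ whose first endpoint was obtained as the $j$-th neighbor of the sampled vertex and was then verified to be heavy, so it cannot equal the light vertex $u$. In the $B=1$ branch the returned ordered pair is $uv$ with $u$ exactly the vertex sampled in line~\ref{step:sample_v}, so outputting $(u,v)$ pins down both the sampled vertex and the index $j$. Combining the identity above with $\BP(B=1)=1/3$ gives $\BP(e=e')=\frac1{n\theta}\cdot\frac13=\frac1{3n\theta}$.

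\emph{Heavy edges.} Let $e'=(v,w)$ with $v$ heavy. By the mirror of the previous argument, the $B=1$ branch cannot return $e'$, since anything it returns has a light first endpoint. Hence $e'$ is output only through the $B=0$ branch, which requires: (i) lines~\ref{step:sample_v}--\ref{step:bern_third}, through the assignment $v\leftarrow j$-th neighbor of $u$, produce some light vertex $u$ with $v$ as its $j$-th neighbor and do not fail; (ii) $B=0$; (iii) $v$ is heavy, which holds by assumption; (iv) line~\ref{step:samp_heavy_nbr} returns our $w$ among the $d(v)$ neighbors of $v$. The vertices $u$ eligible for (i) are exactly the light neighbors of $v$, i.e.\ $d_\ell(v)$ of them; by the identity above each contributes probability $\frac1{n\theta}$, and the corresponding executions are disjoint since they fix distinct sampled vertices. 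Using $\BP(B=0)=2/3$ and that line~\ref{step:samp_heavy_nbr} returns $w$ with probability $1/d(v)$,
\[
\BP(e=e')=\sum_{u\ \sim\ v\text{ light}}\frac1{n\theta}\cdot\frac23\cdot\frac1{d(v)}=\frac{d_\ell(v)}{n\theta}\cdot\frac{2}{3\,d(v)}=\frac23\cdot\frac{d_\ell(v)}{d(v)}\cdot\frac1{n\theta}.
\]

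The step I expect to require the most care is the case bookkeeping: checking that a given directed edge is produced through exactly one of the two return branches (determined by whether its tail is light or heavy), that within that branch the relevant random choices are either forced (light case) or can be cleanly enumerated over the light neighbors of the tail (heavy case), and that these enumerated executions are pairwise disjoint so that their probabilities simply add. The rejection-sampling identity itself, and the reading of the fail-check as ``fail unless $u$ is light and $j$ is a valid neighbor index'', are routine and essentially taken from \cite{eden2018sampling}.
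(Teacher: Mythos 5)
Your proof is correct and follows essentially the same route as the paper's: fix the directed edge, identify the unique branch and the random choices that produce it, and multiply (or, for heavy edges, sum over the light neighbors of the tail). The extra care you take in ruling out the other branch and in spelling out the rejection-sampling identity is consistent with, and merely elaborates on, the paper's argument.
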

\begin{proof}
Fix a light edge $e'=uv$. Recall that by definition, $uv$ is light iff  $d(u)\leq \theta$ for $\theta=\lceil\sqrt{6m}\rceil$.
The edge $uv$ is returned only in the case that (1) $u$ is sampled in Step~\ref{step:sample_v}, (2) the chosen index $j$ in Step~\ref{step:sample_index} is the label of $v$, and (3) $B=1$ in Step~\ref{step:bern_third}. Therefore,
$\Pr[e=e']=\frac{1}{n}\cdot \frac{1}{\lceil \sqrt{6m}\rceil}\cdot\frac{1}{3}=\frac{1}{3n\theta}$.

Now fix a heavy edge $e'=vw$. The edge $vw$ is returned in the event that (1) the sampled vertex $u$ in Step~\ref{step:sample_v} is a light neighbor of $v$, (2)  the chosen index $j$ in Step~\ref{step:sample_index} is the label of $v$, (3) $B=0$ in Step~\ref{step:bern_third},  and (4) $w$ is the sampled neighbor in Step~\ref{step:samp_heavy_nbr}.
Therefore, if we define $\Gamma_L(v)$ to be the set of light neighbors of $v$, then
$\Pr[e=e']=
\sum_{u\in \Gamma_{L}(v)}
\frac{1}{n}\cdot \frac{1}{\lceil \sqrt{6m}\rceil}\cdot\frac{2}{3}\cdot \frac{1}{d(v)}=\frac{2}{3}\cdot \frac{d_{\ell}(v)}{d(v)}\cdot \frac{1}{n\theta}$.
\end{proof}

We are now able to give an algorithm for sampling an edge perfectly uniformly. Simply re-running the above algorithm until it succeeds would result in $\Theta(1)$-pointwise close to uniform sampling. The algorithm below differs in that if \Cref{alg:try_to_sample_an_edge} returns a heavy edge (which has some bias), we use rejection sampling based on Bernoulli factories to reduce the bias.

\begin{algorithm}[t] 
\Repeat{}{
$vw \leftarrow$ \Cref{alg:try_to_sample_an_edge}\\
\If{$v$ is light}{
\Return{vw}
}
\If{$v$ is heavy}{
Let $w_1, \dots, $ be random neighbors of $v$ \label{step:sample_nbrs}\\
$Y \leftarrow $ use \Cref{cor:simulate} on Bernoulli trials defined as $B_i = [d(w_i) \leq \sqrt{6m}]$\\
\If{$Y = 0$}{
\Return{vw}
}
}
}
\caption{Sample an edge $1\pm \varepsilon$-pointwise-close to uniform} \label{alg:sample_an_edge}
\end{algorithm}

\begin{theorem}
\Cref{alg:sample_an_edge} returns a perfectly uniform edge. Its expected complexity is $O(n/\sqrt{m})$.
\end{theorem}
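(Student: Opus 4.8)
The plan is to prove the two assertions — perfect uniformity of the output, and expected cost $O(n/\sqrt m)$ — by analysing a single pass through the \textbf{repeat} loop of \Cref{alg:sample_an_edge} and then aggregating over passes. The structural ingredient that makes the Bernoulli factory applicable is that every heavy vertex has almost all of its neighbours light. Concretely, I would first show that for a heavy vertex $v$ (one with $d(v)>\theta$), the ratio $p_v := d_\ell(v)/d(v)$ lies in $[2/3,1]$: since the degrees of the heavy vertices sum to at most $2m$ there are fewer than $2m/\theta$ of them, so $d(v)-d_\ell(v)$ (the number of heavy neighbours of $v$) is less than $2m/\theta$; dividing by $d(v)>\theta$ and using $\theta^2\ge 6m$ gives $(d(v)-d_\ell(v))/d(v)<1/3$. (If $m$ is known only up to a constant factor, correctness uses that the estimate is an over-estimate, which still gives $\theta^2\ge 6m$.)

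Next I would fix a single pass and compute, for each directed edge $e'$, the probability $q_{e'}$ that the pass returns $e'$. If $e'=(u,v)$ is light, then by \Cref{lemma:alg2} \Cref{alg:try_to_sample_an_edge} outputs $e'$ with probability $\tfrac1{3n\theta}$, and when it does its first endpoint is light, so \Cref{alg:sample_an_edge} returns $e'$ at once; thus $q_{e'}=\tfrac1{3n\theta}$. If $e'=(v,w)$ is heavy, then by \Cref{lemma:alg2} \Cref{alg:try_to_sample_an_edge} outputs it with probability $\tfrac23\cdot p_v\cdot\tfrac1{n\theta}$, after which the loop draws fresh independent $Bern(p_v)$ coins $B_i=[d(w_i)\le\sqrt{6m}]$ and, invoking \Cref{cor:simulate} (legitimate since $p_v\in[2/3,1]$), keeps $e'$ with probability exactly $\tfrac1{2p_v}$; hence $q_{e'}=\tfrac23 p_v\tfrac1{n\theta}\cdot\tfrac1{2p_v}=\tfrac1{3n\theta}$. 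So every directed edge is returned in a given pass with the same probability $\alpha:=\tfrac1{3n\theta}$, and distinct passes are i.i.d. The standard rejection-sampling identity then yields, conditioned on the loop ever returning, $\BP[\text{output}=e']=\alpha/\big(\sum_{e''}q_{e''}\big)=\alpha/(2m\alpha)=1/(2m)$ for every directed edge (equivalently, each undirected edge with probability $1/m$); in particular $2m\alpha>0$ forces termination with probability $1$, so the output is a perfectly uniform edge.

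For the complexity, the number of passes is geometric with success probability $\sum_{e'}q_{e'}=2m\alpha=\tfrac{2m}{3n\theta}$, so it has expectation $\tfrac{3n\theta}{2m}=\Theta(n/\sqrt m)$ since $\theta=\Theta(\sqrt m)$ (here the complexity analysis uses that the estimate of $m$ is also a constant-factor lower bound). Each call to \Cref{alg:try_to_sample_an_edge} costs $O(1)$ queries, and conditioned on reaching the heavy branch the factory of \Cref{cor:simulate} performs $O(1)$ expected $Bern(p_v)$ trials — each a single neighbour query followed by a degree query — with a constant that is absolute because the relevant interval is fixed; so one pass costs $O(1)$ in expectation. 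Writing $N$ for the number of passes and $W_i$ for the (i.i.d.) cost of pass $i$, the event $\{N\ge i\}$ depends only on passes $1,\dots,i-1$ and is hence independent of $W_i$, so $\BE\big[\sum_{i\ge1}W_i\mathbf 1[N\ge i]\big]=\BE[W_1]\,\BE[N]=O(1)\cdot O(n/\sqrt m)=O(n/\sqrt m)$.

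The crux is the exact probability accounting in the heavy case: one must verify that the bias $\tfrac23\cdot\tfrac{d_\ell(v)}{d(v)}$ of \Cref{lemma:alg2} is cancelled \emph{exactly} by the acceptance probability $\tfrac1{2p_v}$ produced by the Bernoulli factory, and that the factory's precondition $p_v\in[2/3,1]$ genuinely holds — which is exactly where the bound on the number of heavy vertices and the precise choice $\theta=\lceil\sqrt{6m}\,\rceil$ come in. Given those two points, the remaining pieces (the i.i.d.\ pass decomposition, the rejection-sampling identity, and the Wald-type expectation bound) are routine.
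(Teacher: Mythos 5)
Your proposal is correct and follows essentially the same route as the paper: verify $p_v=d_\ell(v)/d(v)\in[2/3,1]$ for heavy $v$ via the bound on the number of heavy vertices and the choice $\theta=\lceil\sqrt{6m}\rceil$, apply \Cref{lemma:alg2} together with \Cref{cor:simulate} to show every directed edge is returned in a single pass with the same probability $\tfrac{1}{3n\theta}$, and then conclude uniformity by the rejection-sampling identity and the $O(n/\sqrt m)$ bound by combining the geometric number of passes with the $O(1)$ expected cost per pass. Your write-up is slightly more explicit than the paper's about the directed-edge bookkeeping and about the Wald-type independence argument for the total cost, but these are just spelled-out versions of steps the paper treats as immediate.
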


\begin{proof}
We start with proving the correctness of the algorithm. 
By Lemma~\ref{lemma:alg2},  each invocation of~\Cref{alg:try_to_sample_an_edge}
returns each light edge with probability 
 $\frac{1}{3n\theta}$, and each heavy edge with probability $\frac{2}{3} \cdot \frac{d_{\ell}(v)}{d(v)} \cdot \frac{1}{n\theta}$. 
If~\Cref{alg:try_to_sample_an_edge} returns a heavy edge $vw$, then for every $w_i$ sampled in Step~\ref{step:sample_nbrs} in~\Cref{alg:sample_an_edge}, it holds that the indicator of the event $[d(w_i) \leq \lceil \sqrt{6m} \rceil]$ is the result of a Bernoulli trial $Bern(p)$ with $p=\frac{d_{\ell}(v)}{d(v)}$. Let $H$ denote the set of vertices with degree greater than $\lceil \sqrt{6m} \rceil$. 
Then $deg_H(v)\leq |H|\leq \frac{2m}{\lceil \sqrt{6m} \rceil} \leq \sqrt{\frac{2}{3}m}\leq \frac{1}{3}d(v)$, where the last is since $v$ is heavy (so $d(v) > \lceil \sqrt{6m} \rceil)$.
Therefore, $p=\frac{d_{\ell}(v)}{d(v)}\in [\frac{2}{3},1]$.
Hence, by \Cref{cor:simulate}, the value $Y$ returned by \Cref{alg:sample_an_edge} has distribution
$Y \sim \text{Bern}(\frac{1}{2p})$.
Therefore, in a single iteration of the repeat loop, every fixed heavy edge $vw$ is returned with probability 
$\frac{2}{3} \cdot \frac{d_{\ell}(v)}{d(v)} \cdot \frac{1}{n\theta}\cdot \left(\frac{1}{2p}\right)\in \frac{1}{3n\theta}$, as $p = \frac{d_\ell(v)}{d(v)}$.

Therefore, every edge is sampled with probability exactly $\frac{1}{3n},$ so conditioning on some edge being returned, each edge is returned with probability in $\frac{1}{m}$, as claimed. 

We turn to analyze the complexity of the algorithm. 
By the above analysis, every invocation of the loop returns an edge with probability at least $m \cdot \frac{1}{3n \theta} \ge \frac{\sqrt{m}}{10 n}.$ Also, note that each invocation is independent.
Therefore, the expected number of iterations until an edge is returned is $O(n/\sqrt m)$. Furthermore, each invocation of the loop invokes \Cref{alg:try_to_sample_an_edge} once, and Corollary \ref{cor:simulate} at most once. \Cref{alg:try_to_sample_an_edge} clearly takes a constant number of queries.
If \Cref{alg:try_to_sample_an_edge} returns a heavy edge, then sampling the $w_i$ neighbors in Step~\ref{step:sample_nbrs} takes $O(1)$ queries in expectation. 
Therefore, the expected number of queries in each loop
is constant. Hence, the expected query complexity is $\Theta\left(n/\sqrt m\right)$.  
\end{proof}

\section*{Acknowledgments}
We thank Nima Anari and Peter Occil for informing us about the existence of Bernoulli factories in the literature, and Nima for pointing us to the reference \cite{nacu2005simulation}.
%

\bibliographystyle{plainnat}
\bibliography{literature}

\begin{thebibliography}{13}
\providecommand{\natexlab}[1]{#1}
\providecommand{\url}[1]{\texttt{#1}}
\expandafter\ifx\csname urlstyle\endcsname\relax
  \providecommand{\doi}[1]{doi: #1}\else
  \providecommand{\doi}{doi: \begingroup \urlstyle{rm}\Url}\fi

\bibitem[Addanki et~al.(2022)Addanki, McGregor, and Musco]{addanki2022non}
Raghavendra Addanki, Andrew McGregor, and Cameron Musco.
\newblock Non-adaptive edge counting and sampling via bipartite independent set
  queries.
\newblock \emph{arXiv preprint arXiv:2207.02817}, 2022.

\bibitem[Aliakbarpour et~al.(2018)Aliakbarpour, Biswas, Gouleakis, Peebles,
  Rubinfeld, and Yodpinyanee]{aliakbarpour}
Maryam Aliakbarpour, Amartya~Shankha Biswas, Themis Gouleakis, John Peebles,
  Ronitt Rubinfeld, and Anak Yodpinyanee.
\newblock Sublinear-time algorithms for counting star subgraphs via edge
  sampling.
\newblock \emph{Algorithmica}, 80\penalty0 (2):\penalty0 668--697, 2018.

\bibitem[Assadi et~al.(2019)Assadi, Kapralov, and Khanna]{AKK19}
Sepehr Assadi, Michael Kapralov, and Sanjeev Khanna.
\newblock A simple sublinear-time algorithm for counting arbitrary subgraphs
  via edge sampling.
\newblock In \emph{Innovations in Theoretical Computer Science Conference
  {ITCS}}, volume 124 of \emph{LIPIcs}, pages 6:1--6:20. Schloss Dagstuhl -
  Leibniz-Zentrum fuer Informatik, 2019.

\bibitem[Bhattacharya et~al.(2022)Bhattacharya, Bishnu, Ghosh, and
  Mishra]{bhattacharya2022faster}
Anup Bhattacharya, Arijit Bishnu, Arijit Ghosh, and Gopinath Mishra.
\newblock Faster counting and sampling algorithms using colorful decision
  oracle.
\newblock In \emph{39th International Symposium on Theoretical Aspects of
  Computer Science (STACS 2022)}. Schloss Dagstuhl-Leibniz-Zentrum f{\"u}r
  Informatik, 2022.

\bibitem[Biswas et~al.(2021)Biswas, Eden, and Rubinfeld]{BER21}
Amartya~Shankha Biswas, Talya Eden, and Ronitt Rubinfeld.
\newblock Towards a decomposition-optimal algorithm for counting and sampling
  arbitrary motifs in sublinear time.
\newblock In \emph{Approximation, Randomization, and Combinatorial
  Optimization. Algorithms and Techniques, {APPROX/RANDOM} 2021, to appear},
  2021.

\bibitem[Eden and Rosenbaum(2018)]{eden2018sampling}
Talya Eden and Will Rosenbaum.
\newblock {On Sampling Edges Almost Uniformly}.
\newblock In Raimund Seidel, editor, \emph{1st Symposium on Simplicity in
  Algorithms (SOSA 2018)}, volume~61 of \emph{OpenAccess Series in Informatics
  (OASIcs)}, pages 7:1--7:9, Dagstuhl, Germany, 2018. Schloss
  Dagstuhl--Leibniz-Zentrum fuer Informatik.
\newblock ISBN 978-3-95977-064-4.
\newblock \doi{10.4230/OASIcs.SOSA.2018.7}.
\newblock URL \url{http://drops.dagstuhl.de/opus/volltexte/2018/8300}.

\bibitem[Eden et~al.(2019)Eden, Ron, and Rosenbaum]{eden2019arboricity}
Talya Eden, Dana Ron, and Will Rosenbaum.
\newblock {The Arboricity Captures the Complexity of Sampling Edges}.
\newblock In Christel Baier, Ioannis Chatzigiannakis, Paola Flocchini, and
  Stefano Leonardi, editors, \emph{46th International Colloquium on Automata,
  Languages, and Programming (ICALP 2019)}, volume 132 of \emph{Leibniz
  International Proceedings in Informatics (LIPIcs)}, pages 52:1--52:14,
  Dagstuhl, Germany, 2019. Schloss Dagstuhl--Leibniz-Zentrum fuer Informatik.
\newblock ISBN 978-3-95977-109-2.
\newblock \doi{10.4230/LIPIcs.ICALP.2019.52}.
\newblock URL \url{http://drops.dagstuhl.de/opus/volltexte/2019/10628}.

\bibitem[Eden et~al.(2021)Eden, Mossel, and Rubinfeld]{eden_multiple}
Talya Eden, Saleet Mossel, and Ronitt Rubinfeld.
\newblock Sampling multiple edges efficiently.
\newblock In \emph{Approximation, Randomization, and Combinatorial
  Optimization. Algorithms and Techniques (APPROX/RANDOM 2021)}. Schloss
  Dagstuhl-Leibniz-Zentrum f{\"u}r Informatik, 2021.

\bibitem[Feige(2006)]{feige2006sums}
Uriel Feige.
\newblock On sums of independent random variables with unbounded variance and
  estimating the average degree in a graph.
\newblock \emph{SIAM Journal on Computing}, 35\penalty0 (4):\penalty0 964--984,
  2006.

\bibitem[Fichtenberger et~al.(2020)Fichtenberger, Gao, and Peng]{Peng20}
Hendrik Fichtenberger, Mingze Gao, and Pan Peng.
\newblock Sampling arbitrary subgraphs exactly uniformly in sublinear time.
\newblock In Artur Czumaj, Anuj Dawar, and Emanuela Merelli, editors,
  \emph{47th International Colloquium on Automata, Languages, and Programming,
  {ICALP} 2020, July 8-11, 2020, Saarbr{\"{u}}cken, Germany (Virtual
  Conference)}, volume 168 of \emph{LIPIcs}, pages 45:1--45:13. Schloss
  Dagstuhl - Leibniz-Zentrum f{\"{u}}r Informatik, 2020.
\newblock \doi{10.4230/LIPIcs.ICALP.2020.45}.
\newblock URL \url{https://doi.org/10.4230/LIPIcs.ICALP.2020.45}.

\bibitem[Lapinskas et~al.(2019)Lapinskas, Dell, and
  Meeks]{lapinskas2019approximately}
John~A Lapinskas, Holger Dell, and Kitty Meeks.
\newblock Approximately counting and sampling small witnesses using a colourful
  decision oracle.
\newblock In \emph{ACM-SIAM Symposium on Discrete Algorithms (SODA20)}, 2019.

\bibitem[Nacu and Peres(2006)]{nacu2005simulation}
\c{S}erban Nacu and Yuval Peres.
\newblock Fast simulation of new coins from old.
\newblock \emph{SIAM Journal on Computing}, 35\penalty0 (4):\penalty0 964--984,
  2006.

\bibitem[T\v{e}tek and Thorup(2022)]{tetek2022edge}
Jakub T\v{e}tek and Mikkel Thorup.
\newblock Edge sampling and graph parameter estimation via vertex neighborhood
  accesses.
\newblock In \emph{Proceedings of the 54th Annual ACM SIGACT Symposium on
  Theory of Computing}, STOC 2022, page 1116–1129, New York, NY, USA, 2022.
  Association for Computing Machinery.
\newblock ISBN 9781450392648.
\newblock \doi{10.1145/3519935.3520059}.
\newblock URL \url{https://doi.org/10.1145/3519935.3520059}.

\end{thebibliography}
\end{document}